\documentclass[aps,pra,twocolumn,showpacs,groupedaddress]{revtex4}

\usepackage{graphicx}
\usepackage{color}
\usepackage{amsthm}
\usepackage[toc,page,title,titletoc,header]{appendix}
\usepackage{CJK}
\theoremstyle{plain}% default
\newtheorem{thm}{Theorem}

\theoremstyle{definition}

\theoremstyle{remark}
\newtheorem*{rem}{Remark}

\newcommand{\Tr}{\text{Tr}}
\newcommand{\ket}[1]{|#1\rangle}
\newcommand{\bra}[1]{\langle#1|}
\bibliographystyle{apsrev}

\begin{document}
%\begin{CJK*}{GB}{gbsn}
\title{Time-dependent Decoherence-Free Subspace}

\author{S. L. Wu,  L. C. Wang, X. X. Yi}

\email{yixx@dlut.edu.cn}

\affiliation{School of Physics and Optoelectronic Technology,\\
Dalian University of Technology, Dalian 116024 China}

\date{\today}
\begin{abstract}
With time-dependent Lindblad operators, an open system may have a
time-dependent decoherence-free subspace (t-DFS). In this paper, we
define the t-DFS and  present a  necessary and sufficient condition
for the t-DFS.  Two examples are presented to illustrate the t-DFS,
which show that this t-DFS is not trivial, when the dimension of the
t-DFS varies.
\end{abstract}

\pacs{03.65.Yz, 03.67.-a, 03.65.Ta } \maketitle

%\end{CJK*}

\section{Introduction}

In recent years, many  proposals are presented  to protect quantum
systems against decoherence. Except the method to weaken the
coupling between the system and its surroundings, these proposals
include the dynamical decoupling \cite{dd1,dd2,dd3}, quantum
error-correcting codes \cite{cc1,cc2}, the scheme based on the
decoherence-free subspaces (DFSs) or noiseless subsystems
\cite{dfs1,dfs2,dfs3,dfs4}, and the scheme based on  the quantum
reservoir engineering \cite{qe1,qe2}.

The  DFS is defined as a subspace within which the system undergoes
an unitary evolution \cite{dfs4}. It was experimentally realized in
a variety of systems \cite{dfse1,dfse2,dfse3} and has drawn much
attention because its potential applications in quantum information
processing \cite{dfsa1,dfsa2}. Generally speaking, there are two
ways to define the DFS. {The first is given in \cite{dfs2}, where
the DFS includes all states that each state $\rho$ satisfies
$\mathcal{L}(\rho)=0$, where $\mathcal{L}(...)$ represents the
Lindblad superoperator (we call it as the first definition of DFS).}
The second definition was given in \cite{dfs4}, which is formulated
as follows. Let the time evolution of an open system with Hilbert
space $\mathcal H_S$ be governed by the Markovian master equation. A
decoherence-free subspace $\mathcal H_{\text{DFS}}$ is defined as a
subspace of $\mathcal H_S$ such that all states $\rho(t)$ in DFS
fulfill $
\partial_t
\Tr[\rho^2(t)]=0,\,\,$ for
$\forall\,t\geq0,\,\,\text{with}\,\Tr[\rho^2(0)]=1. $ By this
definition, it was proved  that the subspace $ \mathcal
H_{\text{DFS}}=\text{Span}\{\ket{\Phi_1},\ket{\Phi_2},\cdot\cdot\cdot,\ket{\Phi_M}\}
$ is a  DFS if and only if each basis of $\mathcal H_{\text{DFS}}$
satisfies $ F_\alpha\ket{\Phi_j}=c_\alpha\ket{\Phi_j}, j=1,...,M;
\alpha=1,...,K, $ and $\mathcal H_{\text{DFS}}$ is invariant under $
H_{\text{eff}}= H+\frac{i}{2}\sum_\alpha\left(c^*_\alpha
F_\alpha-c_\alpha F^\dag_\alpha\right), \label{heff} $ where $H$ is
the Hamiltonian of the open quantum system.

Notice that the basis of the aforementioned DFS is time-independent,
the DFS is then time-independent. For time-dependent Lindblad
operators, however, a time-independent DFS may not exist, then a
natural question arises:{What is the DFS for open systems with
time-dependent Lindblad operators? If the DFS is time-dependent,
what is the condition for the system to remain in this subspace?}

In this paper, we will give a detail analysis for a time-dependent
DFS (t-DFS). The analysis is given based on the Lindblad master
equation with time-dependent Lindblad operators. We shall adopt the
second definition for the  t-DFS  and develop a theorem to give a
necessary and sufficient condition for the t-DFS.

This paper is organized as follows. In Sec.\ref{SNC}, we shall give
the definition for the t-DFS and derive a  necessary and sufficient
condition for it. In Sec.\ref{example}, we present two  examples to
illustrate the t-DFS, showing that the necessary and sufficient
condition can be satisfied by manipulating the Hamiltonian. Finally
we conclude by summarizing our results in Sec.\ref{conclusion}.

\section{necessary and sufficient condition for time-dependent DFS}\label{SNC}

In this section, we will present a necessary and sufficient
condition for the t-DFS and show that this condition can be
satisfied by manipulating the Hamiltonian of the open system.

Consider a system $S$ with $N$-dimensional Hilbert space $\mathcal
H_S$ coupling with an environment.  The time evolution of the open
system is assumed to be governed  by,
\begin{eqnarray}
\dot{\rho}(t)&=&-i[H(t),\rho(t)]+\mathcal{L}(t)\rho(t),\nonumber\\
\mathcal{L}(t)\rho(t)&=&\sum_\alpha \left[F_\alpha(t) \rho(t)
F_\alpha^\dag(t)-\frac{1}{2}\{F_\alpha^\dag(t) F_\alpha(t),
\rho(t)\}\right],\nonumber\\ \label{tmq}
\end{eqnarray}%
where $F_\alpha(t)$ $(\alpha=1,2,3...,K)$  are time-dependent
Lindblad operators {and $\mathcal{L}(t)$ describes the Lindblad
superoperator of the open quantum system.}

\begin{thm}
Let the time evolution of an open quantum system in a
finite-dimensional Hilbert space be governed by Eq.(\ref{tmq}) with
time-dependent Hamiltonian $ H(t)$ and time-dependent Lindblad
operators $F_\alpha(t)$. The subspace
\begin{eqnarray}
\mathcal
H_{\text{DFS}}(t)=\text{Span}\{\ket{\Phi_1(t)},\ket{\Phi_2(t)},
\cdot\cdot\cdot,\ket{\Phi_M(t)}\}
\end{eqnarray}%
is a t-DFS if and only if each basis vector of $\mathcal
H_{\text{DFS}}(t)$ satisfies
\begin{eqnarray}
F_\alpha(t)\ket{\Phi_j(t)}=c_\alpha(t)\ket{\Phi_j(t)}, j=1,...,M;
\alpha=1,...,K, \nonumber\\ \label{sandn}
\end{eqnarray}%
and $\mathcal H_{\text{DFS}}(t)$ is invariant under
\begin{eqnarray}
H_{\text{eff}}(t)&=&G(t)+
H(t)\nonumber\\&&+\frac{i}{2}\sum_\alpha\left(c^*_\alpha(t)
F_\alpha(t)-c_\alpha(t) F^\dag_\alpha(t)\right). \label{heff}
\end{eqnarray}%
Here $G(t)=iU^\dag(t)\dot U(t)$  and $U(t)$ is an unitary operator
\begin{eqnarray}
 U(t)=\sum_{j=1}^M
\ket{\Phi_j(0)}\bra{\Phi_j(t)}+\sum_{n=1}^{N-M}
\ket{\Phi_n^\bot(0)}\bra{\Phi_n^\bot(t)}.\label{unitary}
\end{eqnarray}%
\end{thm}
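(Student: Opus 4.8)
The plan is to remove the time-dependence of the subspace by a rotating-frame transformation, reduce the claim to the known fixed-subspace case, and then translate the condition back. Fix an orthonormal basis $\{\ket{\Phi_j(t)}\}_{j=1}^{M}$ of $\mathcal H_{\text{DFS}}(t)$ together with an orthonormal basis $\{\ket{\Phi_n^\bot(t)}\}$ of its orthogonal complement, so that the operator $U(t)$ in Eq.~(\ref{unitary}) is unitary with $U(t)\ket{\Phi_j(t)}=\ket{\Phi_j(0)}$ and $U^\dag(t)\ket{\Phi_j(0)}=\ket{\Phi_j(t)}$. Put $\tilde\rho(t)=U(t)\rho(t)U^\dag(t)$. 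From $UU^\dag=\mathbf{1}$ and $G=iU^\dag\dot U$ one gets $\dot U=-iUG$, and a direct differentiation of $\tilde\rho$ turns Eq.~(\ref{tmq}) into a Lindblad equation of the same form, $\dot{\tilde\rho}=-i[\tilde H(t),\tilde\rho]+\tilde{\mathcal L}(t)\tilde\rho$, with $\tilde H(t)=U(G+H)U^\dag$ and transformed Lindblad operators $\tilde F_\alpha(t)=UF_\alpha U^\dag$; the $\dot U$-dependent terms $\dot U\rho U^\dag+U\rho\dot U^\dag$ combine into $-i[UGU^\dag,\tilde\rho]$, which is precisely what contributes the $G$-piece of $\tilde H$. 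Since $U$ is unitary, $\Tr[\tilde\rho^2]=\Tr[\rho^2]$, so $\mathcal H_{\text{DFS}}(t)$ is a t-DFS precisely when the \emph{time-independent} subspace $\tilde{\mathcal H}=\text{Span}\{\ket{\Phi_1(0)},\dots,\ket{\Phi_M(0)}\}$, with projector $P$, is decoherence-free for the transformed dynamics.

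Next I would prove the fixed-subspace criterion instantaneously, which is the time-dependent version of the argument behind the DFS theorem of \cite{dfs4}. For a pure $\tilde\rho=\ket{\psi}\bra{\psi}$ with $\ket{\psi}\in\tilde{\mathcal H}$, $\tfrac12\partial_t\Tr[\tilde\rho^2]=\Tr[\tilde\rho\dot{\tilde\rho}]$; the Hamiltonian part vanishes because $\Tr[\tilde\rho[\tilde H,\tilde\rho]]=0$, and the dissipator part equals $-\sum_\alpha\big(\,\|\tilde F_\alpha(t)\ket{\psi}\|^2-|\bra{\psi}\tilde F_\alpha(t)\ket{\psi}|^2\,\big)$, which is $\le 0$ by Cauchy--Schwarz. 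Hence it vanishes for every $\ket{\psi}\in\tilde{\mathcal H}$ iff $\tilde F_\alpha(t)\ket{\psi}$ is proportional to $\ket{\psi}$ for all such $\ket{\psi}$; applying $\tilde F_\alpha(t)$ to a sum of two independent eigenvectors shows the proportionality constant is common on $\tilde{\mathcal H}$, i.e.\ $\tilde F_\alpha(t)P=\tilde c_\alpha(t)P$ for scalars $\tilde c_\alpha(t)$. Granting this, one checks the identity $\tilde{\mathcal L}(t)\tilde\rho=-i\big[\tfrac{i}{2}\sum_\alpha(\tilde c_\alpha^*\tilde F_\alpha-\tilde c_\alpha\tilde F_\alpha^\dag),\tilde\rho\big]$ for every $\tilde\rho$ supported in $\tilde{\mathcal H}$, so the generator restricted to $\tilde{\mathcal H}$ is $-i[\tilde H_{\text{eff}}(t),\cdot]$ with $\tilde H_{\text{eff}}=\tilde H+\tfrac{i}{2}\sum_\alpha(\tilde c_\alpha^*\tilde F_\alpha-\tilde c_\alpha\tilde F_\alpha^\dag)$. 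Consequently $P^\bot\dot{\tilde\rho}P=-iP^\bot\tilde H_{\text{eff}}(t)P\,\tilde\rho$, which vanishes for all states supported in $\tilde{\mathcal H}$ iff $P^\bot\tilde H_{\text{eff}}(t)P=0$, i.e.\ iff $\tilde{\mathcal H}$ is invariant under $\tilde H_{\text{eff}}(t)$. Sufficiency follows at once: under both conditions a state launched in $\tilde{\mathcal H}$ evolves unitarily within $\tilde{\mathcal H}$ and stays pure for all $t$.

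Finally I would undo the transformation. Since $U^\dag(t)\ket{\Phi_j(0)}=\ket{\Phi_j(t)}$, the relation $\tilde F_\alpha(t)P=\tilde c_\alpha(t)P$ is equivalent to $F_\alpha(t)\ket{\Phi_j(t)}=\tilde c_\alpha(t)\ket{\Phi_j(t)}$, i.e.\ Eq.~(\ref{sandn}) with $c_\alpha(t)=\tilde c_\alpha(t)$. Moreover $\tilde H_{\text{eff}}(t)=U(t)H_{\text{eff}}(t)U^\dag(t)$ with $H_{\text{eff}}$ as in Eq.~(\ref{heff}), so invariance of $\tilde{\mathcal H}$ under $\tilde H_{\text{eff}}(t)$ is the same as invariance of $\mathcal H_{\text{DFS}}(t)$ under $H_{\text{eff}}(t)$. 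Combining the two directions with the instantaneous criterion of the previous paragraph yields the theorem.

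I expect the main obstacle to be the necessity half for $t>0$: one must read the t-DFS definition as demanding $\partial_t\Tr[\rho^2]=0$ \emph{instantaneously along every trajectory whose state currently lies in $\mathcal H_{\text{DFS}}(t)$}, since it is precisely this instantaneous requirement, and not merely the condition at $t=0$, that forces the Cauchy--Schwarz step and the no-leakage condition $P^\bot\tilde H_{\text{eff}}(t)P=0$, hence Eqs.~(\ref{sandn})--(\ref{heff}), to hold at all times. A minor but necessary technical assumption is that the instantaneous basis and its complement can be chosen orthonormal and differentiable in $t$, so that $U(t)$ is a well-defined differentiable unitary and $G(t)$ exists.
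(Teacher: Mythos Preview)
Your proposal is correct and follows essentially the same route as the paper: pass to the rotating frame via $U(t)$, reduce to a fixed-subspace criterion, derive the common-eigenvector condition from $\partial_t\Tr[\rho^2]=0$ on pure states (your Cauchy--Schwarz step is exactly the paper's $F_\alpha\ket{\varphi}=c_\alpha\ket{\varphi}+\ket{\varphi^\perp}$ computation), use the superposition argument for a common eigenvalue, and then obtain the $H_{\text{eff}}$-invariance from the resulting unitary generator. Your explicit remark that necessity at $t>0$ requires reading the definition as an instantaneous condition along trajectories, together with the differentiability caveat on the moving basis, sharpens a point the paper leaves implicit, but the overall strategy and ingredients coincide.
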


\begin{proof}
Firstly, notice that the effect of $U(t)$ is to map a set of
time-dependent bases of $\mathcal H_S$ into a  time-independent one.
Transforming the density matrix $\rho(t)$ into a rotating frame,
i.e., $\bar \rho(t)=U(t)\rho(t)U^\dag(t)$, we write the master
equation Eq.(\ref{tmq}) as,
\begin{eqnarray}
\frac{\partial\bar\rho(t)}{\partial t}&=&-i\left[\bar{
H}(t),\bar\rho(t)\right]\nonumber\\&&-i \bar
G(t)\bar\rho(t)+i\bar\rho(t) \bar G(t)+\bar{\mathcal
L}(t)\bar\rho(t),\label{meq2}
\end{eqnarray}%
where, $$\bar{H}(t)=U(t) H(t) U^\dag(t),$$ $$\bar{\mathcal
L}(t)\bar\rho(t)=\sum_\alpha(\bar F_\alpha(t)\bar\rho(t)\bar
F_{\alpha}^\dag(t)-1/2\{\bar F_{\alpha}^\dag(t)\bar
F_\alpha(t),\bar\rho(t)\})$$ with $\bar{F}_\alpha(t)= U(t)
F_\alpha(t) U^\dag(t)$. {Clearly, $\bar
G(t)=U(t)G(t)U^\dag(t)=i\dot{ U}(t){ U}^\dag(t)=-i
U(t)\dot{U}^\dag(t)=\bar G^\dag(t)$, this indicates that $\bar G(t)$
is a Hermitian operator.} By defining new Lindlbad operator as
$\tilde F_\alpha(t)=\bar F_\alpha(t)-c_\alpha(t)$, the decoherence
terms  in Eq.(\ref{meq2}) can be rewritten as
\begin{eqnarray}
\bar{\mathcal L}(t)\bar\rho(t)&=&\tilde{\mathcal
L}(t)\bar\rho(t)\nonumber\\
&-&i\left[\frac{i}{2}\sum_\alpha(c^*_\alpha(t)\bar
F_\alpha(t)-c_\alpha(t)\bar
F^\dag_\alpha(t)),\bar\rho(t)\right],\nonumber\\ \label{supero}
\end{eqnarray}
where
$$\tilde{\mathcal L}(t)\bar\rho(t)=\sum_\alpha(\tilde
F_\alpha(t)\bar\rho(t)\tilde F_{\alpha}^\dag(t)-1/2\{\tilde
F_{\alpha}^\dag(t)\tilde F_\alpha(t),\bar\rho(t)\}).$$

Note that the requirement of invariance of the subspace $\mathcal
H_{\text{DFS}}(t)$ under the operator $ H_{\text{eff}}(t)$ in
Theorem 1 implies $\bra{\Phi^\bot_n(0)}\bar
H_{\text{eff}}(t)\ket{\Phi_j(0)}$$=\bra{ \Phi^\bot_n(t)}
H_{\text{eff}}(t)\ket{\Phi_j(t)}=0$ for $\forall\, n,\,j$. In the
rotating frame, the subspace $\bar \mathcal H_{\text{DFS}}$ spanned
by  $\{\ket{\Phi_j(0)}\}$ must be invariant under the operator
\begin{eqnarray}
\bar{ H}_{\text{eff}}(t)&=&\bar{{H}}(t)+\bar
G(t)\nonumber\\&&+\frac{i}{2}\sum_\alpha\left(c^*_\alpha(t)\bar
F_\alpha(t)-c_\alpha(t)\bar F^\dag_\alpha(t)\right).
\end{eqnarray}%

Now we prove that the condition is sufficient. Any state
$\ket{\varphi(t)}\in\mathcal H_{\text{DFS}}(t)$ can be expanded  by
$\{\ket{\Phi_j(t)}\}$,
\begin{eqnarray}
\ket{\varphi(t)}=\sum_{j=1}^M a_j(t)\ket{\Phi_j(t)}.
\end{eqnarray}%
By defining $\ket{\bar\varphi(t)}=\text U(t)\ket{\varphi(t)},$
$\bar\rho(t)$ and $\bar{\mathcal L}(t)\bar\rho(t)$ will be written
as,
\begin{eqnarray}
&\bar\rho(t)= U(t)\ket{\varphi(t)}\bra{\varphi(t)}
U^\dag(t),\nonumber\\
&\bar{\mathcal
L}(t)\bar\rho(t)=-i\left[\frac{i}{2}\sum_\alpha(c^*_\alpha(t)\bar
F_\alpha(t)-c_\alpha(t)\bar
F^\dag_\alpha(t)),\bar\rho(t)\right],\nonumber\\
\end{eqnarray}%
where $\tilde
F_\alpha(t)\ket{\bar\varphi(t)}=0\times\ket{\bar\varphi(t)}=0$ and
$\tilde{\mathcal L}(t) \bar\rho(t)=0$ have been used. Hence the
evolution of $\bar\rho(t)$ is governed by
\begin{eqnarray}
\frac{\partial{\bar\rho}(t)}{\partial t}=-i[\bar{\text
H}_{\text{eff}}(t),\bar\rho(t)],\label{UT}
\end{eqnarray}%
so
\begin{eqnarray}
\frac{\partial}{\partial
t}{\Tr[\rho^2(t)]}&&=\frac{\partial}{\partial
t}{\Tr[\bar\rho^2(t)]}=2\Tr[\bar\rho(t)\dot{\bar\rho}(t)]\nonumber\\
&&=-2i\Tr\{[\bar{\text
H}_{\text{eff}}(t),\bar\rho(t)]\,\bar\rho(t)\}=0.
\end{eqnarray}%
Thus the conditions Eqs.(\ref{sandn}) and (\ref{heff}) are
sufficient for $\frac{\partial}{\partial t}{\Tr[\rho^2(t)]}=0$.

In order to prove that the conditions are necessary, we assume that
the set of basis $\{\ket{\Phi_j(t)}\}$ spans a t-DFS. At a fixed
instant $t_0$, the quantum state $\ket{\varphi(t_0)}$ embeds in
t-DFS
\begin{eqnarray}
&\ket{\varphi(t_0)}=\sum_{j=1}^M a_j(t_0) \ket{\Phi_j(t_0)},
\nonumber\\
&\rho(t_0)= \ket{\varphi(t_0)}\bra{\varphi(t_0)},
\end{eqnarray}%
which implies
\begin{eqnarray}
0&=&\frac{\partial}{\partial
t}{\Tr[\rho^2(t)]}|_{t=t_0}=2\Tr[\rho(t_0)\,{\mathcal
L}(t_0)\rho(t_0)]\nonumber\\
&&=2\bra{\varphi(t_0)} {\mathcal L}(t_0)\rho(t_0)
\ket{\varphi(t_0)}.
\end{eqnarray}%
Then, by using Eq.(\ref{tmq}), we obtain
\begin{eqnarray}
0&=&\bra{\varphi(t_0)}{\mathcal
L}(t_0)\rho(t_0) \ket{\varphi(t_0)}\nonumber\\
&=&\sum_\alpha\gamma_\alpha\left[\bra{\varphi(t_0)}F_\alpha(t)\ket{\varphi(t_0)}\bra{\varphi(t_0)}
F_\alpha^\dag(t)\ket{\varphi(t_0)}\right.\nonumber\\
&&\left.-\bra{\varphi(t_0)}F_\alpha^\dag(t)F_\alpha(t)\ket{\varphi(t_0)}\right].\label{ff}
\end{eqnarray}%
Without loss of generality, we set
$F_\alpha(t_0)\ket{\varphi(t_0)}=c_\alpha(t_0)\ket{\varphi(t_0)}+\ket{\varphi^\bot(t_0)}$
with $\ket{\varphi^\bot(t_0)}$ being a  (non-normalized) state
orthogonal to state $\ket{\varphi(t_0)}$. Submitting this into
Eq.(\ref{ff}), we have
\begin{eqnarray}
\sum_\alpha\gamma_\alpha\langle\varphi^\bot(t_0)\ket{\varphi^\bot(t_0)}=0.
\end{eqnarray}%
Since $\gamma_\alpha>0$ for $\forall\alpha$, we have
$\langle\varphi^\bot(t_0)\ket{\varphi^\bot(t_0)}=0$. It
straightforwardly follows
$F_\alpha(t_0)\ket{\varphi(t_0)}=c_\alpha(t_0)\ket{\varphi(t_0)}$.

Next,  we prove by contradiction that the basis vectors
$\{\ket{\Phi_j(t_0)}\}$ are the eigenstates of the Lindblad
operators $F_\alpha(t_0)$ with the same eigenvalues $c_\alpha(t_0)$.
Suppose that two arbitrary eigenvectors of the Lindblad operator
$\ket{\Phi_k(t)}$ and $\ket{\Phi_{k'}(t)}$ have different
eigenvalues, i.e.,
\begin{eqnarray}
&F_\alpha(t)\ket{\Phi_k(t)}=c_{\alpha,k}(t)\ket{\Phi_k(t)},\nonumber\\
&F_\alpha(t)\ket{\Phi_{k'}(t)}=c_{\alpha,k'}(t)\ket{\Phi_{k'}(t)}.
\end{eqnarray}%
The state $\ket{\phi(t)}=(\ket{\Phi_k(t)}+\ket{\Phi_{k'}(t)})/\sqrt
2$ must be not an eigenstate of $F_\alpha(t)$. However, since
$\ket{\phi(t)}$ is a state lying in $\mathcal H_{\text{DFS}}(t)$,
the state should fulfill $\langle \mathcal L
[\ket{\phi(t)}\bra{\phi(t)}]\rangle=0$. Hence the eigenvalues must
be equal for all basis vectors.

On the other hand, according to Eq.(\ref{supero}), for $\bar\rho(t)=
U(t) \ket{\varphi(t)}\bra{\varphi(t)}  U^\dag(t)$ with $\bar
F_\alpha(t)\ket{\bar \varphi(t)}=c_\alpha(t)\ket{\bar \varphi(t)}$,
we have
\begin{eqnarray}
\frac{\partial{\bar\rho}(t)}{\partial t}&=&-i \bar
G(t)\bar\rho(t)+i\bar\rho(t) \bar G(t)-i\left[\bar{\text
H}(t),\bar\rho(t)\right]\nonumber\\
&&+\bar{\mathcal
L}(t)\bar\rho(t)\nonumber\\
&=&-i[\bar{ H}_{\text{eff}}(t),\bar\rho(t)].
\end{eqnarray}
and
\begin{widetext}
\begin{eqnarray}
\bar\rho(t)=\mathcal T\exp\left[-i\int_o^t \bar{\text
H}_{\text{eff}}(\tau)\text d \tau\right]\bar\rho(0)\,\mathcal
T\exp\left[i\int_o^t \bar{ H}_{\text{eff}}(\tau)\text d \tau\right],
\end{eqnarray}
\end{widetext}
where $\mathcal T$ is the time-order operator. If $\ket{\bar
\psi(t)}=\bar{H}_{\text{eff}}(t)\ket{\bar \varphi(t)}$ is still  in
subspace $\bar \mathcal H_{\text{DFS}}$, $\ket{\bar\varphi(t)}$ will
always be in $\bar \mathcal H_{\text{DFS}}$. Thus,
$\ket{\varphi(t)}$ will evolve unitarily in $\mathcal
H_{\text{DFS}}(t)$, if $ H_{\text{eff}}(t)\ket{\varphi(t)}$ is still
a superposition of the basis vectors $\{\ket{\Phi_j(t)}\}$ of the
t-DFS. Hence the conditions are necessary for
$\frac{\partial}{\partial t}{\Tr[\rho^2(t)]}=0$.
\end{proof}
\begin{rem}
Examining  the unitary transformation Eq.(\ref{unitary}), one may
suspect that the t-DFS  is the same as (time-independent) DFS, in
the sense that a state in the t-DFS $\mathcal H_{\text{DFS}}(t)$ can
be obtained from the time-independent DFS $\bar\mathcal
H_{\text{DFS}}$ by the unitary transformation $U(t)$. This is not
true, because for a t-DFS, its dimension  can change with time. When
the dimension of the t-DFS changes, we can not transfer the
time-independent DFS into a t-DFS by an unitary transformation. In
Sec.\ref{example}, we will present an example to illustrate this
$remark$.
\end{rem}

We now show how to realize a t-DFS by the Theorem 1. Suppose that
there is a set of degenerated eigenstates $\{\ket{\Phi_j(t)}\}$ of
all Lindblad operators $F_\alpha(t)$. As stated above, if $\mathcal
H_{\text{DFS}}(t)$ is a t-DFS, it must be invariant under $
H_{\text{eff}}(t)$, i.e.,
\begin{eqnarray}
\bra{\Phi^\bot_n(t)}\text
H_{\text{eff}}(t)\ket{\Phi_j(t)}=0,\forall\, n,\,j.\label{x1}
\end{eqnarray}
Substituting Eq.(\ref{heff}) into Eq.(\ref{x1}) and considering
$F_\alpha(t)\ket{\Phi_j(t)}=c_\alpha(t)\ket{\Phi_j(t)}$, we obtain
\begin{eqnarray}
-\bra{\Phi_n^\bot(t)}\dot{\Phi}_j(t)\rangle
-i\bra{\Phi^\bot_n(t)}{H}(t)\ket{\Phi_j(t)}&&\nonumber\\
-\frac{1}{2}\sum_\alpha\gamma_\alpha
c_\alpha(t)\bra{\Phi_n^\bot(t)}F^\dag_\alpha(t)\ket{\Phi_j(t)}&=&0.
\end{eqnarray}
Once the evolution of the Lindblad operators $F_\alpha(t)$ are
fixed, the basis vectors of $\mathcal H_{\text{DFS}}(t)$ are
specified. The task to achieve an unitary evolution in the t-DFS
relies entirely on the  Hamiltonian of the open system,
\begin{eqnarray}
&&\bra{\Phi_k(t)}{H}(t)\ket{\Phi^\bot_n(t)} =-i\bra{\dot{\Phi}_k(t)}
\Phi_n^\bot(t)\rangle\nonumber\\
&&-\frac{i}{2}\sum_\alpha\gamma_\alpha
c^{\ast}_\alpha(t)\bra{\Phi_k(t)}F_\alpha(t)\ket{\Phi_n^\bot(t)}.\label{relation}
\end{eqnarray}
In the next section, we will present two examples to show that the
quantum state of the open system can be restricted to evolve
unitarily in the t-DFS by manipulating the Hamiltonian according to
Eq.(\ref{relation}).

\section{examples}\label{example}

In this section, we will present two examples. In the first example,
we show that the sufficient and necessary condition for the t-DFS
provides us a way to keep the open system in the t-DFS by
manipulating the system Hamiltonian. In contrast to the scheme  in
Ref.\cite{dfe1}, where the adiabatic condition is required to
maintain the system with high fidelity in the t-DFS, here the
adiabatic condition is removed. Here we should mention that some
effort has been done on such issue, in which the time-dependent
Lindblad operators evolves unitarily\cite{tdf}. However, for the
t-DFS, such a limit on the evolution of Lindblad operators is not
necessary, therefore t-DFS we proposed is general. From the other
aspect, by the first example, we explain why the adiabatic condition
is required in Ref.\cite{dfe1}--the Hamiltonian in that paper does
not satisfy the condition of t-DFS. It seems that the t-DFS can be
found by an unitary transformation, this is not the case when the
dimension of the t-DFS changes, this point will be shown in  the
second example.

\subsection{Driven $\Xi$-type three-level atom coupled to a
broadband time-dependent squeezing vacuum reservoir}\label{example1}

Consider a $\Xi$-type atom coupled to a time-dependent broadband
squeezed vacuum field \cite{geo1,geo2}.  For squeezed vacuum field,
the initial state can be written as
\begin{eqnarray}
\ket{vac(\eta)}= K(\eta)\ket{vac}, \nonumber
\end{eqnarray}
where $K(\eta)$ is a multi-mode squeezing transformation. There are
three classical fields $\Omega_j(t) (j=1,2,3)$ interacting with the
atom as shown in FIG.\ref{three}. Under the Born-Markovian
approximation, the evolution of the atom is governed by the
following master equation \cite{geo1},
\begin{eqnarray}
\dot \rho(t)=&&-i[
H(t),\rho(t)]-\frac{\gamma}{2}\{R^\dag(\eta)R(\eta)\rho(t)\nonumber\\
&&+\rho(t)R^\dag(\eta)R(\eta)-2R(\eta)\rho(t)R^\dag(\eta)\},\label{maeq1}
\end{eqnarray}
where
$H(t)=[\Omega_1(t)\ket{1}\bra{0}+\Omega_2(t)
\ket{0}\bra{-1}+\Omega_3(t)\ket{1}\bra{-1}+h.c.]$,
and
\begin{eqnarray}
R(\eta)=S\cosh(r)+\exp(i\phi)S^\dag\sinh(r).\label{up1}
\end{eqnarray}
The operator $S=\ket{1}\bra{0}+\ket{0}\bra{-1}$ denotes the
absorption of an excitation from time-dependent broadband squeezed
vacuum field, and $\eta(t)=r\exp(i\phi)$ is time-dependent squeezing
parameter with polar coordinates $\phi\in[0,2\pi]$ and $r>0$  called
phase and amplitude of the squeezing, respectively. Form
Eq.(\ref{up1}), the state
\begin{eqnarray}
\ket{\Phi_{\text{DF}}(r,\phi)}=c(r)\ket{-1}-\exp(i\phi)s(r)\ket{1},
\end{eqnarray}
with $c(r)=\cosh(r)/\sqrt{\cosh(2r)}$ and
$s(r)=\sinh(r)/\sqrt{\cosh(2r)}$, satisfies
$R(\eta)\ket{\Phi_{\text{DF}}(t)}=0$. Here we assume that the phase
of squeezing parameter is time-dependent with a  linear trend
$\phi=\omega_0 t$, and the amplitude of squeezing parameter is
constant.

\begin{figure}
\includegraphics[scale=0.5]{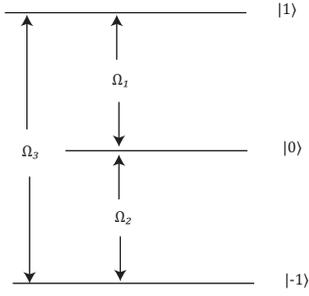}
\caption{(Color online) Schematic energy levels. A $\Xi$-type
three-level system driven by three classical fields $\Omega_j(t).$
We treat this system as an open system since we will consider it
coupled to a broadband squeezed vacuum reservoir.}\label{three}
\end{figure}

It has been shown in Refs.\cite{geo1,geo2} that the decoherence-free
evolution can be achieved by adiabatically changing the squeezing
phase. Comparing with Ref.\cite{geo1}, we will show here that the
adiabatic limit can be removed  when the time-dependent Hamiltonian
in Eq.(\ref{maeq1}) is manipulated  according to
Eq.(\ref{relation}). In other words, the open system could evolve
unitarily in the t-DFS by employing the scheme proposed in
Sec.\ref{SNC}. The master equation Eq.(\ref{maeq1}) can also be
realized in the model of a pair of trapped four-level
atoms\cite{geo3}, hence there are many manner to realize it in
experiment.

In the following, the state $\ket{\Phi_{\text{DF}}(t)}$ is chosen as
the basis of the t-DFS,  whose complemental space is then spanned by
$\ket{\Phi^\bot_1(t)}=s(r)\ket{-1}+\exp(i\omega_0 t)c(r)\ket{1}$ and
$\ket{\Phi^\bot_2(t)}=\ket{0}$. According to Eq.(\ref{relation}),
the classical fields $\Omega_j(t)$ must be modulated as follow,
\begin{eqnarray}
\Omega_1(t)&=&\cosh(r)\exp(i\omega_0t),\nonumber\\
\Omega_2(t)&=&\sinh(r),\nonumber\\
\Omega_3(t)&=&\omega_0\sinh(r)\cosh(r)\exp(i\omega_0t),\label{cf1}
\end{eqnarray}
such that $\ket{\Phi_{\text{DF}}(t)}$ forms a t-DFS.

The numerical results are presented in FIG.\ref{p1}, in which the
initial state is chosen to be
$\ket{\Phi_{\text{DF}}(0)}=c(r){\ket{-1}}-s(r)\ket{1}$ with $r=1$.
We plot  the purity $P(t)=\Tr[\rho^2(t)]$ with the parameter
$\omega_0=0.1\gamma$ (FIG.\ref{p1}(a)) and $\omega_0=10\gamma$
(FIG.\ref{p1}(b)) as a function of time. From FIG.\ref{p1}, we can
see that the control fields $\Omega_j(t)$ play an important role in
this scheme. Without the control fields (solid line in
FIG.\ref{p1}), the decoherence-free evolution can be approximately
realized only  under the adiabatic limit, i.e., the smaller  the
$\omega_0$ is, the better the state remains in the t-DFS. This can
be understood by examining  Eq.(\ref{relation}): When the control
fields are absent, the t-DFS condition is equivalent to that the
right side of Eq.(\ref{relation}) is  zero, i.e.,
$\omega_0\rightarrow0$. In other words, to make sure that the
evolution of quantum state is decoherence free, the squeezing phase
has to change adiabatically. As expected, when the open system are
controlled by the classical fields $\Omega_j(t)$ as Eq.(\ref{cf1})
(dash line in FIG.\ref{p1}), the evolution of the quantum state is
always unitary regardless of how fast the squeezing phase changes.
\begin{figure}
\includegraphics[scale=0.5]{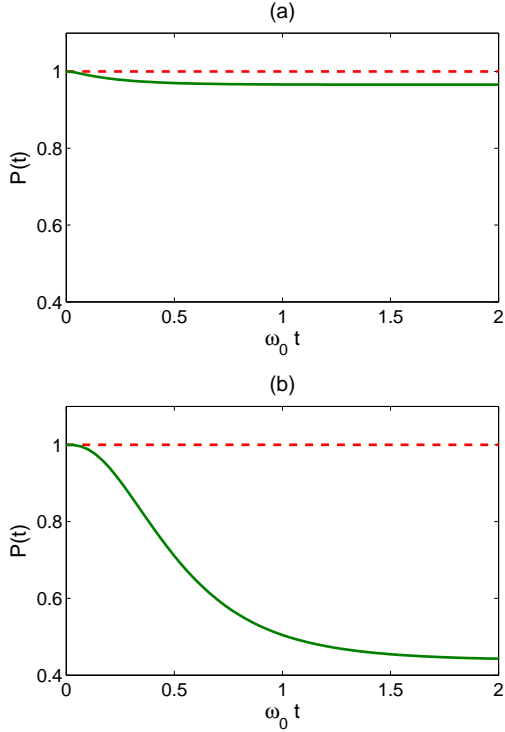} \caption{(Color online) The purity
$P(t)$ as a function of $\omega_0 t$ (in units of $\pi$) with
time-dependent Hamiltonian (dash line) and time-independent
Hamiltonian (solid line) for (a) $\omega_0=0.1\gamma$ and (b)
$\omega_0=10\gamma$. The the other parameters chosen are $\gamma=1$,
$r=1$.}\label{p1}
\end{figure}

\subsection{A toy model for DFS with time-dependent dimension}\label{example3}

In this subsection, we present a toy model with varying dimension of
t-DFS.
\begin{figure}
\includegraphics[scale=0.55]{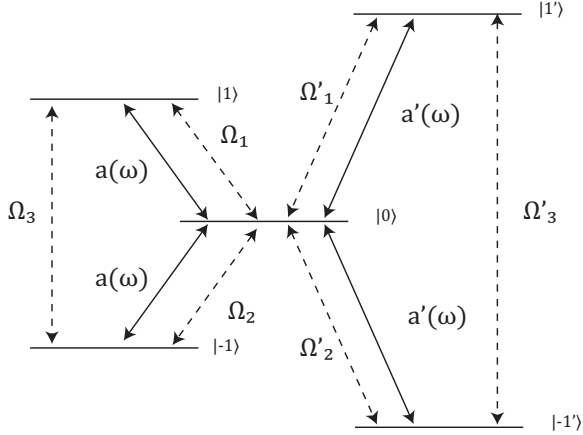}
\caption{(Color online) Schematic energy diagram. A five-level
system, transition $\ket{1}\leftrightarrow\ket{0}$ and
$\ket{0}\leftrightarrow\ket{-1}$ are driven by mode $a$,
$\ket{1'}\leftrightarrow\ket{0}$ and
$\ket{0}\leftrightarrow\ket{-1'}$ are driven by mode $a'$, and
$\ket{1'}\leftrightarrow\ket{-1'}$ by $b$.}\label{5l}
\end{figure}
Consider a five-level system coupled to two different broadband
squeezing vacuum fields and six classical control fields ($\Omega_j$
and $\Omega'_j$), as shown in FIG.\ref{5l}. $a(\omega)$ and
$a'(\omega$) are the annihilation operators of the environment with
different polarization and frequency $\omega$. Suppose the squeezing
vacuum states for the modes $a(\omega)$ and $a'(\omega)$ have
different squeezing parameters $\eta_1(t)=r_1\exp(i\omega_0 t)$ and
$\eta_2(t)=r_2\exp(i\omega_0 t)$, respectively. Under the same
assumptions used in Eq.(\ref{maeq1}), the master equation can be
written as \cite{geo1},
\begin{eqnarray}
\dot{\rho}(t)&=&-i[{H}(t),\rho(t)]\nonumber\\
&&+\sum_{\alpha=1}^2 \gamma_\alpha \left[F_\alpha(t) \rho(t)
F_\alpha^\dag(t)-\frac{1}{2}\{F_\alpha^\dag(t) F_\alpha(t),
\rho(t)\}\right],\nonumber\\
\end{eqnarray}%
where the Lindblad operators are
$F_\alpha(t)=\cosh(r_\alpha)S_\alpha+\exp(i\omega_0
t)\sinh(r_\alpha)S_\alpha^\dag$ ($\alpha=1,2$) with
$S_1=\ket{1}\bra{0}+\ket{0}\bra{-1}$ and
$S_2=\ket{1'}\bra{0}+\ket{0}\bra{-1'}$. The Hilbert space $\mathcal
H_S(t)$ of open system can be spanned by the following orthogonal
normalized bases,
\begin{eqnarray}
&&\ket{\Phi_{\text{DF1}}(t)}=c_1\ket{-1}-\exp(i\omega_0 t)s_1\ket{1},\\
&&\ket{\Phi_{\text{DF2}}(t)}=c_2\ket{-1'}-\exp(i\omega_0 t)s_2\ket{1'},\\
&&\ket{\Phi^\bot_1(t)}=s_1\ket{-1}+\exp(i\omega_0 t)c_1\ket{1},\\
&&\ket{\Phi^\bot_2(t)}=s_2\ket{-1'}+\exp(i\omega_0 t)c_2\ket{1'},\\
&&\ket{\Phi^\bot_0(t)}=\ket{0},
\end{eqnarray}
with $s_i{=}\sinh(r_i)/\sqrt{\cosh(2r_i)}$ and $c_i{=}\cosh(r_i)/
\sqrt{\cosh(2r_i)}$. It is not difficult to check that this model
admits a two-dimensional t-DFS at most,
\begin{eqnarray}
&&F_1(t)\ket{\Phi_{\text{DF1}}}=F_2(t)\ket{\Phi_{\text{DF1}}}=0,\\
&&F_1(t)\ket{\Phi_{\text{DF2}}}=F_2(t)\ket{\Phi_{\text{DF2}}}=0.
\end{eqnarray}
The Hamiltonian
\begin{eqnarray}
&& H(t)={H}_0(t)+T(t)\Omega(\ket{\Phi_{\text{DF1}}(t)}
\bra{\Phi_{\text{DF2}}(t)}+h.c.),\nonumber\\
\end{eqnarray}
includes two terms: The first term $ H_0(t)$ is required to
construct the t-DFS with time-dependent dimension,
\begin{eqnarray}
\text
H_0(t)&=&(\Omega_1(t)\ket{1}\bra{0}+\Omega_2(t)
\ket{0}\bra{-1}+\Omega_3(t)\ket{1}\bra{-1}\nonumber\\
&&+\Omega_4(t)\ket{1'}\bra{0}+\Omega_5(t)\ket{0}
\bra{-1'}+\Omega_6(t)\ket{1'}\bra{-1'}\nonumber\\&&+h.c.),
\end{eqnarray}
where the classical field strengthes, according to
Eq.(\ref{relation}), are designed as follow,
\begin{eqnarray}
\Omega_1(t)&=&\cosh(r_1)\exp(i\omega_0t),\nonumber\\
\Omega_2(t)&=&\sinh(r_1),\nonumber\\
\Omega_3(t)&=&\omega_0\sinh(r_1)\cosh(r_1)\exp(-i\omega_0t),\nonumber\\
\Omega'_1(t)&=&\cosh(r_2)\exp(i\omega_0t),\nonumber\\
\Omega'_2(t)&=&\left\{
                \begin{array}{ll}
                  0, & {\omega_0t<\pi/2;} \\
                  (1+\cos(2\omega_0t))\sinh(r_2), & {\pi/2\leq\omega_0t\leq\pi}; \\
                  \sinh(r_2), & { \omega_0t>\pi.}
                \end{array}
              \right.\nonumber\\
\Omega'_3(t)&=&\omega_0\sinh(r_2)\cosh(r_2)\exp(-i\omega_0t).\label{cf3}
\end{eqnarray}
\begin{figure}
\includegraphics[scale=0.55]{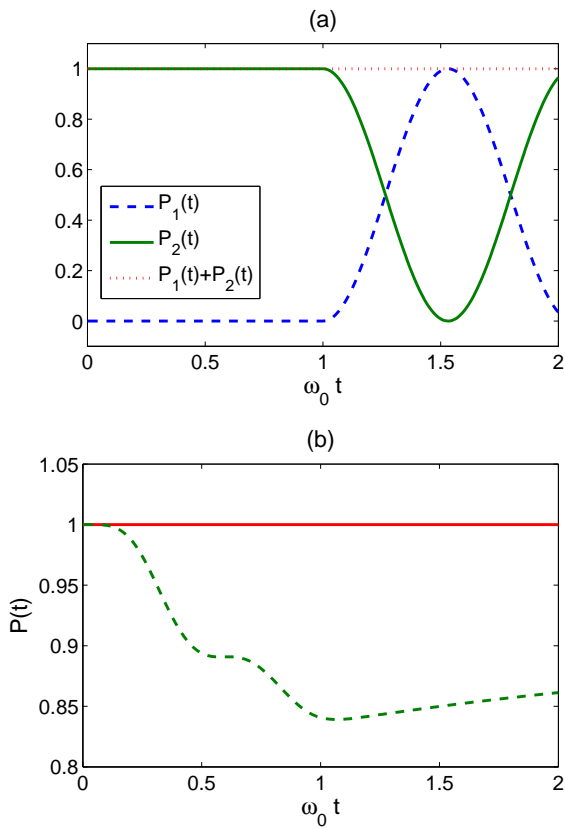}
\caption{(Color online) (a) The evolution of population on
$\ket{\Phi_{\text{DF1}}(t)}$ (solid line),
$\ket{\Phi_{\text{DF2}}(t)}$ (dash line) and population in $\mathcal
H_{\text{DFS}}(t)$ (dots line) with $r_1=r_2=1$ and
$\Omega=\omega_0=\gamma=1$. (b) The evolution of the purity $P(t)$
with $T(t)=1$ (dash line) and $T(t)=\theta(\omega_0t-\pi)$ (solid
line).} \label{p3}
\end{figure}
This implies that, for $0{<}\omega_0 t{\leq}\pi$, the t-DFS
$\mathcal H_{\text{DFS}}(t)$ is one-dimensional and spanned by
$\ket{\Phi_{\text{DF1}}(t)}$, since $\bra{\Phi^\bot_i(t)}\text
H_{\text{eff}}(t)\ket{\Phi_{\text{DF1}}(t)}=\bra{\Phi_{\text{DF2}}(t)}\text
H_{\text{eff}}(t)\ket{\Phi_{\text{DF1}}(t)}=0$. For $\omega_0
t>\pi$, the DFS $\mathcal H_{\text{DFS}}(t)$ is two-dimensional
spanned by $\ket{\Phi_{\text{DF1}}(t)},\,$ and $
\ket{\Phi_{\text{DF2}}(t)}$, since $\bra{\Phi^\bot_i(t)}\text
H_{\text{eff}}(t)\ket{\Phi_{\text{DF1}}(t)}=\bra{\Phi^\bot_i(t)}\text
H_{\text{eff}}(t)\ket{\Phi_{\text{DF2}}(t)}=0$. Thus,  the dimension
 of the t-DFS $\mathcal H_{\text{DFS}}(t)$ changes  with time.
To illustrate the dimension changing with time, we introduce the
second term of the Hamiltonian that induces a transition between
$\ket{\Phi_{\text{DF1}}(t)}$ and $\ket{\Phi_{\text{DF2}}(t)}$, where
$T(t)=\theta(\omega_0t-\pi)$ is a step function and $\Omega$ is
transition coefficient. This means that the transition is allowed
for $\omega_0 t>\pi$ but it is forbidden for $0<\omega_0 t\leq\pi$.

The numerical simulations are presented in FIG.\ref{p3}. The
population on $\ket{\Phi_{\text{DF1}}(t)}$ (dash line) and
$\ket{\Phi_{\text{DF2}}(t)}$(solid line),
\begin{eqnarray}
P_1(t)=\bra{\Phi_{\text{DF1}}(t)}\rho(t)\ket{\Phi_{\text{DF1}}(t)},\nonumber\\
P_2(t)=\bra{\Phi_{\text{DF2}}(t)}\rho(t)\ket{\Phi_{\text{DF2}}(t)},
\end{eqnarray}
are shown in FIG.\ref{p3}(a). As predicted, when $0<\omega_0
t\leq\pi$, the population stays on $\ket{\Phi_{\text{DF1}}(t)}$;
when $\omega_0 t>\pi$, the population transits between
$\ket{\Phi_{\text{DF1}}(t)}$ and $\ket{\Phi_{\text{DF2}}(t)}$, but
the total population $P_1(t)+P_2(t)$ (dot line in FIG.\ref{p3}(a))
remains unchanged, this means that the system does not leak out the
t-DFS. If the Hamiltonian does not satisfy the t-DFS condition, for
example, $T(t)=1$ instead of $T(t)=\theta(\omega_0t-\pi),$ the
purity $P(t)$ will decay (dash line in FIG.\ref{p3}(b)). This
implies that the dimension of the t-DFS really changes in the time
evolution at time $t_0$ given by $\omega_0 t_0 =\pi.$ Once the
dimension of the t-DFS changes, the time-independent DFS and t-DFS
can not be connected by an unitary transformation. In the other
words, the t-DFS is not trivial.

\section{Conclusion}\label{conclusion}
In summary, we have defined and presented a necessary and sufficient
condition for  the time-dependent decoherence-free subspace (t-DFS)
for open systems. In contrast to the time-independent DFS, the basis
of this t-DFS is time-dependent. Besides, the dimension of t-DFS may
change with time, this  implies that we can not trivially  get the
t-DFS by unitary transformations. Two examples are presented to
illustrate the t-DFS. In the first example, we show in details how
to manipulate the Hamiltonian of a $\Xi-$type system to realize a
one-dimensional t-DFS, while in the second example, we through a toy
model to show that the dimension of the t-DFS can change with time.
The later  indicates that the t-DFS can not be derived by an unitary
transformation  from the conventional DFS.

{The observation  of the t-DFS and the prediction made for the t-DFS
is within reach of recent technology. In fact, in a recent proposal
\cite{cirac3}, the authors proposed a scheme to entangle two atomic
ensemble of Cesium at room temperature  by  engineering reservoir
\cite{cirac1,cirac2}. These techniques together with measurement
\cite{cirac1}, can realize the t-DFS. For example, the
time-dependent Lindblad operators may be achieved by modulating the
detuning between the pumping field and the atom or by modifying the
Zeeman splitting. (A detail relation between Lindblad operators and
parameters of environment can be found in Refs. \cite{cirac3}.) The
parameters in the Lindblad operators can be modulated in experiment
\cite{detu}. All these together can lead to the  t-DFS. In practice,
the Lindblad operators are not known {\it a priori}. However, we can
re-design a decoherence-free subspace by controlling the Hamiltonian
and engineering the reservoir. This means that we can improve the
decoherence by designing a DFS to include the states of interests.}

We thank X. L. Huang for  valuable discussions on this manuscript.
This work is supported by the NSF of China under Grants  No
10935010, No 11175032 and No 10905007.

\end{document}